\newtheorem{lemma}{Lemma}
\DeclareMathOperator{\tr}{tr}
\newcommand{\mathacr}[1]{\mathsf{#1}}
\newcommand{\vect}[1]{\mathbf{#1}}
\def\diag{\mathrm{diag}}
\def\tr{\mathrm{tr}}
\def\Htran{\mbox{\tiny $\mathrm{H}$}}
\def\Ttran{\mbox{\tiny $\mathrm{T}$}}
\begin{document}
\makeatletter
\newcommand*{\rom}[1]{\expandafter\@slowromancap\romannumeral #1@}
\makeatother

\title{Cell-Free Massive MIMO with Large-Scale Fading Decoding and Dynamic Cooperation Clustering
\thanks{ \"O. T. Demir and E. Bj\"ornson were supported by the FFL18-0277 grant from the Swedish Foundation for Strategic Research. L. Sanguinetti was supported by the Italian Ministry of Education and Research (MIUR) in the framework of the CrossLab project (Departments of Excellence).}
}

\author{\IEEEauthorblockN{\"Ozlem Tu\u{g}fe Demir\IEEEauthorrefmark{1},
		Emil Bj\"ornson\IEEEauthorrefmark{1}\IEEEauthorrefmark{2}, and  Luca Sanguinetti\IEEEauthorrefmark{3}}
	\IEEEauthorblockA{\IEEEauthorrefmark{1}Department of Computer Science, KTH Royal Institute of Technology, Sweden (\{ozlemtd, emilbjo\}@kth.se)}
	\IEEEauthorblockA{\IEEEauthorrefmark{2}Department of Electrical Engineering (ISY), Link\"oping University, Sweden }
		\IEEEauthorblockA{\IEEEauthorrefmark{3}Dipartimento di Ingegneria dell’Informazione, University of Pisa, Italy (luca.sanguinetti@unipi.it)}}

\maketitle

\setstretch{1.05}

\begin{abstract}
This paper considers the uplink of user-centric cell-free massive MIMO (multiple-input multiple-output) systems. We utilize the user-centric dynamic cooperation clustering (DCC) framework and derive the achievable spectral efficiency with two-layer decoding that is divided between the access points and the central processing unit (CPU). This decoding method is also known as large-scale fading decoding (LSFD). The fronthaul signaling load is analyzed and a nearly optimal second-layer decoding scheme at the CPU is proposed to reduce the fronthaul requirements compared to the optimal scheme. We also revisit the joint optimization of LSFD weights and uplink power control and show that the corresponding max-min fair optimization problem can be solved optimally via an efficient fixed-point algorithm. We provide simulations that bring new insights into the cell-free massive MIMO implementation.
\end{abstract}
\begin{IEEEkeywords}
	Cell-free massive MIMO, max-min fairness, large-scale fading decoding, dynamic cooperation clustering.
\end{IEEEkeywords}

\vspace{-2mm}
\section{Introduction}
\vspace{-2mm}

Massive MIMO (multiple-input multiple-output) is a key technology in 5G owing to its ability to substantially increase the spectral efficiency (SE) of cellular networks \cite{massivemimobook,MassiveMIMO20}. An important challenge in massive MIMO systems is the SE degradation due to the large pathloss variations and inter-cell interference, in particular for the cell-edge user equipments (UEs). An alternative network structure, known as \emph{cell-free massive MIMO}, was recently proposed as a remedy to these problems \cite{Ngo2017b, demir2021foundations}. In this type of network, all the UEs in a large coverage area are jointly served by multiple distributed access points (APs). The fronthaul connections between the APs and the central processing unit (CPU), enable the division of the processing tasks for coherently serving all the active UEs.

In the original form of cell-free massive MIMO, each UE is served by all the APs in the coverage area \cite{Ngo2017b,  Nayebi2016a, Bjornson2019c}. In \cite{Bjornson2019c}, a taxonomy is introduced in terms of uplink receiver cooperation between APs and CPU, from fully distributed (Level 2) to fully centralized (Level 4) implementations. However, in later works, scalability concerns in terms of complexity for channel estimation, data decoding/precoding, and fronthaul signaling have been raised for all these levels \cite{Interdonato2019a,Buzzi2017a, Bjornson2020} and these must be addressed to enable large-scale deployments of cell-free networks. The fact that most of the APs in a large coverage area have negligible channel gains to a particular UE has motivated researchers to propose \emph{user-centric} methods to select only a subset of APs to serve a particular UE \cite{Interdonato2019a,Buzzi2017a,Bjornson2020}. In \cite{Bjornson2020}, the dynamic cooperation clustering (DCC) framework from the network MIMO literature \cite{Bjornson2011a,
	Bjornson2013d} is incorporated into cell-free massive MIMO. The clusters of APs that serve different UEs may overlap and are selected based on the users' needs.

We note that in \cite{Bjornson2020}, both the centralized (Level 4) and fully distributed (Level 2) uplink implementations are considered with DCC. However, the Level 3 implementation (based on the taxonomy in \cite{Bjornson2019c}) has not been considered with DCC. In Level 3, the CPU applies a second layer of decoding, known as large-scale fading decoding (LSFD), to suppress interference. This distributed processing method was shown to greatly improve the SE compared to Level 2 in the original cell-free massive MIMO \cite{Nayebi2016a,Bjornson2019c}, but has not been studied in user-centric cell-free networks. Level 4 has the best SE performance among others, however it requires the computation of centralized receive combiners at the CPU, which have much higher dimensions compared to Level 3 and Level 2 local combiners and bring higher computational complexity.

This paper proposes the first user-centric Level 3 implementation for cell-free massive MIMO. The main contributions are:
\vspace{-5mm}
\begin{itemize}
	\item We provide the SE with the optimal LSFD and arbitrary local decoding at the APs for user-centric cell-free massive MIMO networks with DCC, which was not considered in previous works.
	\item  We quantify the fronthaul signaling load for the optimal LSFD vector calculation and propose a new nearly optimal LSFD scheme with reduced fronthaul requirements for user-centric cell-free massive MIMO with DCC.
	\item We revisit the max-min fair joint LSFD and uplink power control optimization, for which only a computationally demanding generic solver-based solution without any numerical results \cite{Nayebi2016a} is considered. We show that the jointly optimal solution can be obtained by a novel fixed-point algorithm with low-complexity closed-form updates. The proposed algorithm also provides the optimal max-min fair power control coefficients for Level 2 operation without LSFD where the bisection search employing numerical solvers is utilized \cite{Ngo2017b}.
\end{itemize}

\vspace{-2mm}
\section{System Model}
\vspace{-0.2mm}
We consider a cell-free massive MIMO network with $L$ APs and $K$ single-antenna UEs that are distributed over a large area. Each AP is equipped with $N$ antennas. The channel between UE $k$ and AP $l$ is represented by ${\bf h}_{kl}\in \mathbb{C}^{N}$. We use the standard block fading model \cite{massivemimobook} where ${\bf h}_{kl}$ is constant in time-frequency blocks of $\tau_c$ channel uses and takes an independent realization in each coherence block by following correlated Rayleigh fading: ${\bf h}_{kl} \sim \mathcal{N}_{\mathbb{C}}({\bf 0}_N,{\bf R}_{kl})$. The matrix ${\bf{R}}_{kl}$ specifies the spatial channel correlation between the antennas of AP $l$ and UE $k$, and models also the large-scale fading effects. Since the correlation matrices depend on the long-term statistics of channels, which are fixed during the transmission, we assume they are available wherever needed (see the practical estimation methods reviewed in  \cite{MassiveMIMO20}).

We consider the uplink, which consists of $\tau_p$ channel uses dedicated for pilots and $\tau_c-\tau_p$ channel uses for payload data. We assume the network uses so-called Level 3 operation \cite{Bjornson2019c}, which consists of two stages: local processing and LSFD. In the first stage, each AP locally estimates the channels and applies an arbitrary receive combiner to obtain local estimates of the UE data. These estimates are gathered at the CPU where they are linearly processed to perform joint detection. This is done by a variation of the LSFD strategy \cite{Nayebi2016a, Bjornson2019c,Ozdogan2019,Demir2020}, which relies only on channel statistics (since the channel estimates are not shared over the fronthaul links).

\vspace{-2mm}
\subsection{Dynamic Cooperation Clustering}
\vspace{-1mm}
We consider a user-centric cell-free network by making use of the DCC framework initially proposed in \cite{Bjornson2011a,
Bjornson2013d}. By following the notation in \cite{Bjornson2020}, let $\mathcal{M}_k\subset \{ 1, \ldots, L\}$ denote the subset of APs that serve UE $k$. These subsets are determined by the CPU for a given setup and fixed throughout the communication. We define the DCC matrices based on $\{\mathcal{M}_k;k=1,\ldots,K\}$ as 
\vspace{-1mm}
\begin{equation}\label{eq:DCC}
\vect{D}_{kl}= \begin{cases}
\vect{I}_N & \textrm{if } l \in \mathcal{M}_k \\ 
\vect{0}_{N \times N} & \textrm{if } l \not \in \mathcal{M}_k. \end{cases}
\end{equation}
Moreover, we define $\mathcal{D}_l$ as the set consisting of the UE indices that are served by AP $l$:
\vspace{-1mm}
\begin{equation}
\mathcal{D}_l = \bigg\{ k : \tr(\vect{D}_{kl})\geq 1, k \in \{ 1, \ldots, K \} \bigg\}. \vspace{-2mm}
\end{equation}

\vspace{-2mm}
\subsection{Channel Estimation}
\vspace{-1mm}
The channels from UE $k$ to AP $l$ for $l\in \mathcal{M}_k$ need to be estimated once per each coherence block. In practical large cell-free networks, the usual case is $\tau_p<K$ due to the limited pilot resources. The UEs are assigned pilots from a set of $\tau_p$ mutually orthogonal pilot sequences, and, hence, some UEs share the same pilot. Let $\bm{\varphi}_k \in \mathbb{C}^{\tau_p}$ denote the pilot sequence of UE $k$ with $||\bm{\varphi}_{k}||^2=\tau_p$ and let $\mathcal{P}_k$ denote the subset of UEs that are assigned to the same pilot as UE $k$ (including itself). Then, the received pilot signal ${\bf Z}_l\in \mathbb{C}^{N\times \tau_p}$ at AP $l$ is

\begin{equation}\label{eq:pilot}
 {\bf Z}_{l}=\sum_{k=1}^{K}\sqrt{\rho_p}{\bf h}_{kl}\bm{\varphi}_{k}^{\Ttran}+{\bf N}_l, \vspace{-2mm}
\end{equation}
where $\rho_p$ is the pilot transmit power and the elements of noise matrix ${\bf N}_{l}\in \mathbb{C}^{N \times \tau_p}$ are i.i.d. $\mathcal{N}_{\mathbb{C}}(0,\sigma^2)$ random variables. After correlating $ {\bf Z}_{l}$ with the pilot sequence $\bm{\varphi}_{k}$, a sufficient statistics for estimating the channel of UE $k$ is obtained as
\vspace{-1mm}
\begin{equation}\label{eq:suff-stats} 
 {\bf z}_{kl}=\frac{{\bf Z}_l\bm{\varphi}_k^{*}}{\sqrt{\tau_p}}=\sqrt{\tau_p\rho_p}\sum_{i \in \mathcal{P}_k}{\bf h}_{il}+{\bf n}_{kl},
 \vspace{-2mm}
\end{equation}
where ${\bf n}_{kl}={\bf N}_l\bm{\varphi}_k^{*}/\sqrt{\tau_p}$ has i.i.d. $\mathcal{N}_{\mathbb{C}}(0,\sigma^2)$ elements.
Based on $ {\bf z}_{kl}$, the minimum mean-squared error (MMSE) estimate  of ${\bf h}_{kl}$ is
 \vspace{-0.25cm}
\begin{equation}\label{eq:lmmse} 
{\bf \widehat{h}}_{kl}=\sqrt{\tau_p\rho_p}{\bf R}_{kl}\left(\tau_p\rho_p\sum_{i \in \mathcal{P}_k}{\bf R}_{il}+\sigma^2{\bf I}_N\right)^{-1}{\bf z}_{kl}.
\end{equation}

\vspace{-0.2cm}
\section{Spectral Efficiency}
\vspace{-0.1cm}

 In the uplink data phase, the received signal at AP $l$ is
 \vspace{-0.1cm}
 \begin{equation} \label{eq:received_AP}
 {\bf r}_l=\sum_{k=1}^K{\bf h}_{kl}\sqrt{\eta_k}s_k+{\bf n}_l,
 \vspace{-0.1cm}
 \end{equation}
 where $s_k\in \mathbb{C}$ is the zero-mean information signal of UE $k$ with $\mathbb{E}\left\{|s_k|^2\right\}=1$, and $\eta_k>0$ is the transmit power. The receiver noise is ${\bf n}_l \sim \mathcal{N}_{\mathbb{C}}\left({\bf 0}_N,\sigma^2{\bf I}_N\right)$. Let ${\bf v}_{kl} \in \mathbb{C}^{N}$ denote the local receive combiner for the signal of UE $k \in \mathcal{D}_l$ at AP $l$. It can be selected arbitrarily based on the the channel estimates $\{{\bf \widehat{h}}_{il}\}$ at that AP. The local estimate of $s_k$ at AP $l$ is
  \vspace{-0.2cm}
 \begin{align} \label{eq:local-data-estimate}
 \widehat s_{kl} = \vect{v}_{kl}^{\Htran} \vect{D}_{kl} {\bf r}_{l}  &= \vect{v}_{kl}^{\Htran}\vect{D}_{kl}\vect{h}_{kl}\sqrt{\eta_k} s_k \nonumber \\
 & \hspace{-0.6cm}+  \sum_{i=1,i\ne k}^{K} \vect{v}_{kl}^{\Htran} \vect{D}_{kl}\vect{h}_{il} \sqrt{\eta_i}s_i + \vect{v}_{kl}^{\Htran}\vect{D}_{kl}\vect{n}_{l}.
 \end{align}
This expression holds for any $k$, but since $\vect{D}_{kl}=\vect{0}_{N\times N}$ for $l\notin \mathcal{M}_k$ by \eqref{eq:DCC}, $ \widehat s_{kl} =0$ for APs that are not serving UE $k$. Such estimates are neither calculated nor sent to the CPU.
 
The mostly used linear receiver combiners are maximum ratio (MR), zero-forcing (ZF), regularized ZF (RZF), and MMSE \cite{massivemimobook}. Partial MMSE was recently proposed in \cite{Bjornson2020} as a ``scalable'' decoder. 
For now, we keep the choice of local combiners arbitrary. The APs in $\mathcal{M}_k$ send the data estimates to the CPU, which computes a weighted sum of $\left\{\widehat{s}_{kl}\right\}$:
 \vspace{-0.2cm}
 \begin{align} 
 \widehat{s}_k =\sum_{l\in \mathcal{M}_k}a_{kl}^*\widehat{s}_{kl}=&\sqrt{\eta_k}\left(\sum_{l\in \mathcal{M}_k} a_{kl}^* \vect{v}_{kl}^{\Htran}\vect{h}_{kl}\right)s_k \nonumber \\
 &\hspace{-1.4cm}+\sum\limits_{i=1,i\ne k}^{K}\sqrt{\eta_i}\Bigg(\sum_{l\in \mathcal{M}_k} a_{kl}^* \vect{v}_{kl}^{\Htran} \vect{h}_{il}\Bigg) s_i + n^{\prime}_k\label{eq:CPU_linearcomb-level3}
 \end{align}
 where $a_{k1},\ldots,a_{KL}$ are called the LSFD weights. In \eqref{eq:CPU_linearcomb-level3}, we used the notation $n^{\prime}_k=\sum_{l\in \mathcal{M}_k} a_{kl}^* \vect{v}_{kl}^{\Htran}\vect{n}_{l}$ for the weighted noise term. Let $l_{k1}<\ldots<l_{k|\mathcal{M}_k|}$ denote the ordered indices of the APs that serve UE $k$. We can then define $|\mathcal{M}_k|$-dimensional vectors with the receive-combined channels as
  \vspace{-0.1cm}
 \begin{equation}
 \vect{g}_{ki} = \big[  \vect{v}_{kl_{k1}}^{\Htran}\vect{h}_{il_{k1}} \ \ldots \ \vect{v}_{kl_{k|\mathcal{M}_k|}}^{\Htran}\vect{h}_{il_{k|\mathcal{M}_k|}} \big]^{\Ttran} 
 \end{equation}
and the $|\mathcal{M}_k|$-dimensional LSFD weight vector
 \begin{equation}
 \vect{a}_{k} = \big[  \ a_{kl_{k1}} \ \ldots \ a_{kl_{k|\mathcal{M}_k|}} \ \big]^{\Ttran}. 
 \end{equation}
 Then, $\widehat{s}_k$ in \eqref{eq:CPU_linearcomb-level3} can be expressed in terms of these vectors as
 \begin{equation} 
 \widehat{s}_k=\sqrt{\eta_k}\vect{a}_k^{\Htran}\vect{g}_{kk}s_k+\sum\limits_{i=1,i\ne k}^{K}\sqrt{\eta_i}\vect{a}_k^{\Htran}\vect{g}_{ki}s_i+n^{\prime}_k,  
 \end{equation}
where both the transmit powers and LSFD weights are assumed to be deterministic parameters. An achievable SE is obtained in the next lemma, which extends \cite[Prop. 2]{Bjornson2019c} by exploiting the reduced-size LSFD. 

\begin{lemma} \label{lemma:uplink-capacity}
	An achievable SE\footnote{An achievable SE is a value that is below the capacity and can be achieved in a known manner, in this case by using channel codes designed for additive white Gaussian noise channels.} of UE $k$ with LSFD and DCC is 
	\begin{equation} \label{eq:uplink-rate-expression}
	\begin{split}
	\mathacr{SE}_{k} = \frac{\tau_c-\tau_p}{\tau_c} \log_2  \left( 1 + \mathacr{SINR}_{k}  \right)
	\end{split},
	\end{equation}
	where the effective signal-to-noise-plus-ratio (SINR) is
	\begin{align} 
	{\mathacr{SINR}}_k\!\!=\!\!\frac{\eta_k\left|\vect{a}_k^{\Htran}\mathbb{E}\left\{\vect{g}_{kk}\right\}\right|^2}{\vect{a}_k^{\Htran}\!\left(\!\sum\limits_{i=1}^K\eta_{i}\mathbb{E}\left\{\vect{g}_{ki}\vect{g}_{ki}^{\Htran}\right\}\!-\!\eta_k\mathbb{E}\left\{\vect{g}_{kk}\right\}\mathbb{E}\left\{\vect{g}_{kk}^{\Htran}\right\}\!+\!\vect{F}_k\!\right)\!\vect{a}_k} \label{eq:sinr}
	\end{align}
	and the $|\mathcal{M}_k|\times|\mathcal{M}_k|$ effective noise matrix is 
	\begin{align}
\vect{F}_k=\diag\bigg(\sigma^2\mathbb{E}\Big\{ \big\Vert{\bf v}_{kl_{k1}}\big\Vert^2 \Big\}, \ldots, \sigma^2\mathbb{E}\Big\{ \big\Vert{\bf v}_{kl_{k|\mathcal{M}_k|}}\big\Vert^2 \Big\} \bigg). \nonumber
\end{align}
\end{lemma}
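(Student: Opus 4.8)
The plan is to apply the standard ``use-and-then-forget'' (UatF) bounding technique, treating the deterministic average of the desired channel gain as the useful signal and everything else as effective uncorrelated noise. I would begin from the CPU-combined estimate $\widehat{s}_k = \sqrt{\eta_k}\vect{a}_k^{\Htran}\vect{g}_{kk}s_k + \sum_{i\ne k}\sqrt{\eta_i}\vect{a}_k^{\Htran}\vect{g}_{ki}s_i + n_k'$ and rewrite it by adding and subtracting $\sqrt{\eta_k}\vect{a}_k^{\Htran}\mathbb{E}\{\vect{g}_{kk}\}s_k$. This isolates a deterministic effective channel $\sqrt{\eta_k}\vect{a}_k^{\Htran}\mathbb{E}\{\vect{g}_{kk}\}$ multiplying $s_k$, while collecting the residual gain fluctuation $(\vect{g}_{kk}-\mathbb{E}\{\vect{g}_{kk}\})$, the multiuser interference, and the weighted noise $n_k'$ into a single effective-noise term $w_k$.

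The central tool is the worst-case uncorrelated-noise capacity bound: for a received scalar $y = h\,s_k + w_k$ with deterministic $h$, unit-power $s_k$, and $w_k$ uncorrelated with $s_k$, an achievable SE is $\log_2(1 + |h|^2/\mathbb{E}\{|w_k|^2\})$, obtained by treating $w_k$ as the worst-case Gaussian noise of matching variance. I would therefore first verify the orthogonality $\mathbb{E}\{s_k^* w_k\} = 0$. This follows from three facts: the data symbols $\{s_i\}$ are zero-mean, unit-variance, and mutually independent; they are independent of all channels and of the data-phase noise; and $\mathbb{E}\{\vect{g}_{kk} - \mathbb{E}\{\vect{g}_{kk}\}\} = \vect{0}$. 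Hence the self-fluctuation term, each interference term, and the noise term each contribute zero to the cross-correlation.

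Next I would compute $\mathbb{E}\{|w_k|^2\}$ by exploiting the same independence to show that the three groups are mutually uncorrelated, so that their variances add. The self-fluctuation term yields $\eta_k \vect{a}_k^{\Htran}(\mathbb{E}\{\vect{g}_{kk}\vect{g}_{kk}^{\Htran}\} - \mathbb{E}\{\vect{g}_{kk}\}\mathbb{E}\{\vect{g}_{kk}^{\Htran}\})\vect{a}_k$, the interference contributes $\sum_{i\ne k}\eta_i\vect{a}_k^{\Htran}\mathbb{E}\{\vect{g}_{ki}\vect{g}_{ki}^{\Htran}\}\vect{a}_k$, and the weighted noise gives $\vect{a}_k^{\Htran}\vect{F}_k\vect{a}_k$, where the diagonal form of $\vect{F}_k$ results from the per-AP noise being independent across APs and independent of the combiners (which depend only on the pilot-phase observations). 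Merging the $i=k$ second-moment term from the self-fluctuation into the interference sum turns the first two contributions into $\sum_{i=1}^K \eta_i \vect{a}_k^{\Htran}\mathbb{E}\{\vect{g}_{ki}\vect{g}_{ki}^{\Htran}\}\vect{a}_k - \eta_k \vect{a}_k^{\Htran}\mathbb{E}\{\vect{g}_{kk}\}\mathbb{E}\{\vect{g}_{kk}^{\Htran}\}\vect{a}_k$, which is precisely the denominator of \eqref{eq:sinr}. Together with the signal power $\eta_k|\vect{a}_k^{\Htran}\mathbb{E}\{\vect{g}_{kk}\}|^2$ and the pre-log factor $(\tau_c-\tau_p)/\tau_c$ accounting for the pilot overhead, this yields \eqref{eq:uplink-rate-expression}.

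The main obstacle is not any single calculation but the careful bookkeeping of the independence structure: one must track that the combiners $\vect{v}_{kl}$ are functions of the pilot-phase noise and are thus independent of the data-phase noise $\vect{n}_l$, and that the data symbols are independent of all channel quantities. Only then do the cross-terms vanish and the effective-noise variance decompose additively. The reduction to $|\mathcal{M}_k|$-dimensional vectors $\vect{g}_{ki}$ and $\vect{a}_k$, enabled by $\vect{D}_{kl}=\vect{0}_{N\times N}$ for $l\notin\mathcal{M}_k$, is what distinguishes this from the full-dimensional result in \cite[Prop. 2]{Bjornson2019c}.
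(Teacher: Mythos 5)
Your proof is correct and is essentially the paper's own argument: the paper proves the lemma with a one-line appeal to \cite[Prop.~2]{Bjornson2019c}, whose underlying derivation is precisely the use-and-then-forget decomposition and worst-case uncorrelated Gaussian noise bound that you spell out, including the orthogonality check $\mathbb{E}\{s_k^* w_k\}=0$ and the additive variance bookkeeping. The only DCC-specific ingredient—the reduction to $|\mathcal{M}_k|$-dimensional $\vect{g}_{ki}$ and $\vect{a}_k$ enabled by $\vect{D}_{kl}=\vect{0}_{N\times N}$ for $l\notin\mathcal{M}_k$—is exactly the difference the paper's proof notes, and you identify and handle it correctly.
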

\begin{proof} It follows from \cite[Prop.~2]{Bjornson2019c} with the difference that the statistical vectors and matrices are defined based on DCC.
\end{proof}  
 
Notice that $\mathbb{E}\left\{\vect{g}_{kk}\right\}$ and $\mathbb{E}\left\{\vect{g}_{ki}\vect{g}_{ki}^{\Htran}\right\}$ depend only on the long-term statistics and the local combiner. With MR, closed-form expressions can be computed similar to \cite{Bjornson2020}. With any other combiner, their elements can be obtained by Monte-Carlo simulations.

We note that the denominator term $\sum_{i=1}^K\eta_{i}\mathbb{E}\left\{\vect{g}_{ki}\vect{g}_{ki}^{\Htran}\right\}-\eta_k\mathbb{E}\left\{\vect{g}_{kk}\right\}\mathbb{E}\left\{\vect{g}_{kk}^{\Htran}\right\}+\vect{F}_k$ in \eqref{eq:sinr} is positive definite. Hence, the optimal LSFD vector $\vect{a}_k$ is obtained by maximizing \eqref{eq:sinr} as a generalized Rayleigh quotient \cite[Lemma~B.10]{massivemimobook}. This yields
	\begin{align} \label{eq:LSFD-vector}
	\vect{a}_{k}^{\rm opt}\!\! =\!\! &\left( \sum\limits_{i=1}^K\eta_{i}\mathbb{E}\left\{\vect{g}_{ki}\vect{g}_{ki}^{\Htran}\right\}\!-\!\eta_k\mathbb{E}\left\{\vect{g}_{kk}\right\}\mathbb{E}\left\{\vect{g}_{kk}^{\Htran}\right\}\!+\!\vect{F}_k\!\!\right)^{\!\!-1}\!\!\!\!\!\!\mathbb{E}\!\left\{\vect{g}_{kk}\right\}\!,
	\end{align}
	which leads to
	\begin{align} \label{eq:sinr-optimum}
	\mathacr{SINR}_{k}^{\rm opt} =& \eta_{k}\mathbb{E}\left\{\vect{g}_{kk}^{\Htran}\right\}\Bigg( \sum\limits_{i=1}^K\eta_{i}\mathbb{E}\left\{\vect{g}_{ki}\vect{g}_{ki}^{\Htran}\right\}-\nonumber\\
	&\hspace{0.6cm}\eta_k\mathbb{E}\left\{\vect{g}_{kk}\right\}\mathbb{E}\left\{\vect{g}_{kk}^{\Htran}\right\}+\vect{F}_k\Bigg)^{-1}\mathbb{E}\left\{\vect{g}_{kk}\right\}.
	\end{align}
Although the LSFD concept is known in conventional cell-free massive MIMO literature \cite{Nayebi2016a, Bjornson2019c}, the optimal LSFD vector that maximizes the SE of a user-centric cell-free massive MIMO system has a reduced-size structure and only a subset of APs' statistics is involved in the computation of it, which is determined by the DCC subsets. 

The number of complex multiplications for the calculation of~\eqref{eq:LSFD-vector} can be quantified by using matrix inversion based on Cholesky factorization as in \cite[Lemma~B.2]{massivemimobook} and becomes $|\mathcal{M}_k|^2+\frac{|\mathcal{M}_k|^3-|\mathcal{M}_k|}{3}$, where $|\mathcal{M}_k|$ is the number of APs serving UE $k$. In cell-free massive MIMO without DCC, the corresponding value is $L^2+\frac{L^3-L}{3}$. In large networks, most of the APs will receive a negligible signal power from UE $k$. The SE loss of selecting $|\mathcal{M}_k| \ll L$ can thus be made negligible \cite{Bjornson2020}. Therefore, the complexity of LSFD with DCC is expected to be much lower than in the original form of cell-free massive MIMO and each UE gets its signal decoded by a subset of APs that is selected uniquely for that UE.

\subsection{Fronthaul Signaling Load}

Let us quantify the fronthaul signaling load for LSFD with DCC. AP $l$ needs to send $(\tau_c-\tau_p)|\mathcal{D}_l|$ complex scalars (i.e., the local estimates $ \widehat{s}_{kl}$ for $k \in \mathcal{D}_l$) to the CPU per coherence block. This number can be upper bounded by $(\tau_c-\tau_p) \tau_p$ if we use the pilot assignment algorithm from \cite{Bjornson2020}. 
Since this number is independent of $K$ and $L$, the fronthaul signaling per AP is manageable also in large networks,
The total number of complex scalars from all APs to the CPU per coherence block is given by $(\tau_c-\tau_p) \sum_{l=1}^L |\mathcal{D}_l|$, which is lower than $(\tau_c-\tau_p)LK$ as required by a cell-free system without DCC. 

In addition to the locally-decoded symbol estimates, the statistical parameters must also be sent to the CPU since the channels are estimated locally at the APs and not shared to CPU. We notice that the $(l,l^{\prime})$th element of $\mathbb{E}\left\{\vect{g}_{ki}\vect{g}_{ki}^{\Htran}\right\}$ is given by $\left[\mathbb{E}\left\{\vect{g}_{ki}\right\}\right]_l\left[\mathbb{E}\left\{\vect{g}_{ki}\right\}\right]_{l^{\prime}}^*$ for $ l\neq l^{\prime}$ by the independence of the channels corresponding to different APs. For a generic UE $k\in \mathcal{D}_l$, AP $l$ thus sends the following statistical parameters  to the CPU for the computation of \eqref{eq:LSFD-vector}:
\begin{itemize}
	\item $\mathbb{E}\{ {\bf v}_{kl}^{\Htran}{\bf h}_{il} \}$, for $i=1,\ldots,K$;
	\item $\mathbb{E}\{  |{\bf v}_{kl}^{\Htran}{\bf h}_{il}  |^2\}$, for $i=1,\ldots,K$;
	\item $\sigma^2\mathbb{E} \{ \| {\bf v}_{kl} \|^2 \}$
\end{itemize}
which adds up to $\left(3K+1\right)/2$ complex numbers. Each AP sends these numbers for each of its serving $|\mathcal{D}_l|$ UEs, and, hence, $\left(3K+1\right)/2\sum_{l=1}^{L} |\mathcal{D}_l|$ complex scalars are sent to the CPU in total for Level 3 operation with LSFD and DCC. Notice that the fronthaul signaling load of an original cell-free massive MIMO system is obtained by replacing $\left|\mathcal{D}_l\right|$ by $K$. This increases the fronthaul signaling requirements compared to DCC. We will quantify the gap in Section~\ref{sec:numerical-results}.

\section{A Nearly Optimal LSFD to Solve the Scalability Issue}

Although the LSFD vector in \eqref{eq:LSFD-vector} is optimal and has manageable complexity, the fronthaul signaling load for the statistical parameters increases quadratically with $KL$. This becomes a burden for large networks and makes the method unscalable. Inspired by the partial MMSE local decoding combiner from \cite{Bjornson2020}, we propose here a nearly optimal LSFD vector as a remedy to this problem. We first introduce the set 
\vspace{-0.2cm}
\begin{align}
\mathcal{S}_k^R=\left\{ i  : \sum_{l=1}^L \tr\left(\vect{D}_{kl}\vect{D}_{il}\right)\geq N \min \left(R,\left|\mathcal{M}_k\right| \right) \right\}, \label{eq:Sk}
\end{align} 
which corresponds to the indices of the UEs that are served by at least $\min \left(R,\left|\mathcal{M}_k\right| \right)$ APs that also serve UE $k$. Here, we define $R\geq1$ as a design variable and take the minimum of $R$ and $\left|\mathcal{M}_k\right|= \sum\limits_{l=1}^L\tr\left(\vect{D}_{kl}\right)\big/N$ to guarantee that the set $\mathcal{S}_k^R$ includes at least UE $k$. The rationale behind this is that, in a large cell-free setup with properly designed DCC,\footnote{The joint pilot assignment and cooperation cluster formation algorithm in \cite{Bjornson2020} can be used to select the DCC matrices to minimize pilot contamination.} the interference that affects UE $k$ is mainly due to a small subset of other UEs and these are identified by \eqref{eq:Sk}. Using this set, we can define the nearly optimal LSFD vector as
\vspace{-0.2cm}
	\begin{align} \label{eq:LSFD-vector-partial}
\vect{a}_{k}^{\rm n-opt} = &\Bigg( \sum\limits_{i\in \mathcal{S}_k^R}\eta_{i}\mathbb{E}\left\{\vect{g}_{ki}\vect{g}_{ki}^{\Htran}\right\}-\nonumber\\ 	
&\hspace{0.6cm}\eta_k\mathbb{E}\left\{\vect{g}_{kk}\right\}\mathbb{E}\left\{\vect{g}_{kk}^{\Htran}\right\}+\vect{F}_k\Bigg)^{-1}\mathbb{E}\left\{\vect{g}_{kk}\right\},
\end{align}
by including only the statistics related to the UEs with indices $i\in \mathcal{S}_k^R$. Here, the name ``nearly optimal'' refers to the fact that the new LSFD vector is constructed based on the optimal \eqref{eq:LSFD-vector} by keeping only the dominating interference terms. In this case, for a generic UE $k \in \mathcal{D}_l$ AP $l$ needs to send only $\left(3\left|\mathcal{S}_k^R\right|+1\right)/2$ complex statistical parameters. In total, the number of complex scalars to be sent to the CPU reduces to $\sum_{l=1}^L\sum_{k\in \mathcal{D}_l}\left(3\left|\mathcal{S}_k^R\right|+1\right)/2$. With tens of UEs, $\left|\mathcal{S}_k^R\right|$ will be likely much smaller than $K$. The price to pay for this fronthaul reduction is of course a performance drop in terms of SE that we will quantify by means of numerical results in Section~\ref{sec:numerical-results}. The preferable $R$ can be selected by identifying an appropriate tradeoff between complexity and performance based on the network requirements and traffic.

Note that the index set $\mathcal{S}_k^R$ in \eqref{eq:Sk} is a generalized version of the set that is considered in \cite{Bjornson2020} for only $R=1$. We additionally introduce $\min \left(R,\left|\mathcal{M}_k\right|\right)$ on the right side of \eqref{eq:Sk} to guarantee that the self-interference term in \eqref{eq:LSFD-vector-partial}  is not excluded. This is the first time that the set in \eqref{eq:LSFD-vector-partial} is used to construct a scalable LSFD vector.

\vspace{-0.1cm}
\section{Max-Min Fair Joint LSFD and Power Control}

The max-min fairness optimization criterion aims to maximize the minimum SE among all the UEs in the network and has been extensively considered in the cell-free massive MIMO literature \cite{Ngo2017b,Ngo2018a,Demir2020,Nayebi2016a,Zhang2018a}. However, the joint design of LSFD weights and uplink power control coefficients is only considered in \cite{Nayebi2016a}. The authors of \cite{Nayebi2016a} claimed that the corresponding objective function is quasi-concave and the problem can be solved with the bisection method, but how the resulting objective function involving a matrix inverse can be handled with a numerical solver is not explained. Moreover, the simulation results of \cite{Nayebi2016a} consider fixed uplink powers, thus no optimization is needed at all.
Even if there is a way to cast the problem in a manageable form, a numerical convex optimization solver is needed at each iteration of the bisection search, which will lead to a high computational complexity. Hence, there is a need to develop an efficient algorithm for the joint optimal design of LSFD weights and uplink power control. Such an algorithm is proposed first in this paper.

For the optimal LSFD vectors, we note that irrespective of the UE transmit powers $\left\{\eta_k\right\}$, the corresponding optimal SINRs are given as in \eqref{eq:sinr-optimum}. Hence, we ``only'' have to optimize the SINRs that are achieved with the optimal LSFD vectors with respect to the transmit powers $\{\eta_k\}$.
 Note that maximizing the minimum of the achievable SEs, $\{\mathacr{SE}_k\}$, is equivalent to maximizing the minimum of ${\mathacr{SINR}_k}$ by the monotonicity of the logarithm. Hence, the max-min fairness optimization problem can be written as
 \vspace{-0.2cm}
\begin{align}
& \underset{\left\{\eta_k\right\}}{\mathacr{maximize}} \ \ \underset{k}{\mathacr{min}} \  \eta_{k}\mathbb{E}\left\{\vect{g}_{kk}^{\Htran}\right\}\Bigg( \sum\limits_{i=1}^K\eta_{i}\mathbb{E}\left\{\vect{g}_{ki}\vect{g}_{ki}^{\Htran}\right\}-\nonumber\\
&\hspace{2.6cm}\eta_k\mathbb{E}\left\{\vect{g}_{kk}\right\}\mathbb{E}\left\{\vect{g}_{kk}^{\Htran}\right\}+\vect{F}_k\Bigg)^{-1}\mathbb{E}\left\{\vect{g}_{kk}\right\}  \nonumber \\
& {\mathacr{subject \ to}} \quad  0\leq \eta_k\leq \rho_u, \ \   k=1, \ldots, K, \label{eq:constraint-maxmin}
\end{align} 
where  $\rho_u$ is the maximum transmission power at each UE. We will use the following lemma to solve the above problem optimally in a few iteration steps.
\begin{lemma} \label{lemma:convergence}
	The fixed-point algorithm whose steps are outlined in Algorithm~1 converges to the optimal solution to \eqref{eq:constraint-maxmin} when the power control coefficients are initialized as arbitrary positive values.
\end{lemma}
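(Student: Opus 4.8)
The plan is to recast Lemma~\ref{lemma:convergence} in the language of \emph{standard interference functions}, since Algorithm~1 is a normalized fixed-point iteration on the transmit-power vector $\bm{\eta}=[\eta_1,\ldots,\eta_K]^{\Ttran}$. Denoting the (positive definite) denominator matrix in \eqref{eq:sinr-optimum} by
\[
\vect{M}_k(\bm{\eta})=\sum_{i=1}^K\eta_i\mathbb{E}\{\vect{g}_{ki}\vect{g}_{ki}^{\Htran}\}-\eta_k\mathbb{E}\{\vect{g}_{kk}\}\mathbb{E}\{\vect{g}_{kk}^{\Htran}\}+\vect{F}_k,
\]
the optimal SINR reads $\gamma_k(\bm{\eta})=\eta_k\,\beta_k(\bm{\eta})$ with $\beta_k(\bm{\eta})=\mathbb{E}\{\vect{g}_{kk}^{\Htran}\}\vect{M}_k(\bm{\eta})^{-1}\mathbb{E}\{\vect{g}_{kk}\}>0$. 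The power that user $k$ must spend to attain a common SINR target $t$ is $I_k(\bm{\eta})=t/\beta_k(\bm{\eta})$, and the core of the argument is to show that the map $\vect{I}(\bm{\eta})=[I_1(\bm{\eta}),\ldots,I_K(\bm{\eta})]^{\Ttran}$ obeys the three axioms of a standard interference function: positivity, monotonicity, and scalability.

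Positivity is immediate because $\vect{M}_k(\bm{\eta})\succ\vect{0}$ and $\mathbb{E}\{\vect{g}_{kk}\}\neq\vect{0}$, so $\beta_k(\bm{\eta})>0$. For monotonicity I would note that every term added to $\vect{M}_k$ is positive semidefinite, in particular the coefficient $\mathbb{E}\{\vect{g}_{kk}\vect{g}_{kk}^{\Htran}\}-\mathbb{E}\{\vect{g}_{kk}\}\mathbb{E}\{\vect{g}_{kk}^{\Htran}\}$ of $\eta_k$ is a covariance matrix; hence $\bm{\eta}\geq\bm{\eta}'$ (entrywise) gives $\vect{M}_k(\bm{\eta})\succeq\vect{M}_k(\bm{\eta}')$, so $\vect{M}_k(\bm{\eta})^{-1}\preceq\vect{M}_k(\bm{\eta}')^{-1}$, $\beta_k(\bm{\eta})\leq\beta_k(\bm{\eta}')$, and therefore $I_k(\bm{\eta})\geq I_k(\bm{\eta}')$. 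Scalability is the key step: for $\mu>1$ one has $\vect{M}_k(\mu\bm{\eta})=\mu\vect{M}_k(\bm{\eta})-(\mu-1)\vect{F}_k\prec\mu\vect{M}_k(\bm{\eta})$, so $\vect{M}_k(\mu\bm{\eta})^{-1}\succ\tfrac{1}{\mu}\vect{M}_k(\bm{\eta})^{-1}$ and thus $I_k(\mu\bm{\eta})<\mu\,I_k(\bm{\eta})$. This strict inequality hinges crucially on $\vect{F}_k\succ\vect{0}$, i.e., on the presence of receiver noise.

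With the axioms in hand I would characterize the optimum of \eqref{eq:constraint-maxmin}. By monotonicity, at the optimum the power of any user whose SINR exceeds the minimum could be lowered, relaxing the interference it inflicts and strictly raising the minimum SINR; hence all $\gamma_k$ coincide at the optimum. By scalability, scaling all powers by a common $\mu>1$ yields $\gamma_k(\mu\bm{\eta})=\mu\eta_k\beta_k(\mu\bm{\eta})>\eta_k\beta_k(\bm{\eta})=\gamma_k(\bm{\eta})$, so if no user were at the cap we could scale up and improve the minimum; hence $\max_k\eta_k^\star=\rho_u$. These two conditions single out $\bm{\eta}^\star$ as the unique vector with $\eta_k^\star\beta_k(\bm{\eta}^\star)$ constant in $k$ and $\max_k\eta_k^\star=\rho_u$, which is precisely a fixed point of the normalized update $\bm{\eta}\mapsto\rho_u\,\vect{I}(\bm{\eta})/\max_j I_j(\bm{\eta})$ realized by Algorithm~1 (the unknown balancing level $t$ cancels in the normalization, which is exactly what removes the bisection of \cite{Nayebi2016a}). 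Convergence from an arbitrary positive initialization then follows from the monotone-convergence property of standard interference functions: the normalized map is itself monotone and scalable, admits the unique fixed point $\bm{\eta}^\star$, and the iterates can be sandwiched between a monotonically increasing sub-solution sequence and a monotonically decreasing super-solution sequence, both converging to $\bm{\eta}^\star$.

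I expect the main obstacle to be two-fold. First, verifying scalability rigorously for the matrix-inverse form $\beta_k(\bm{\eta})$ and pinpointing that it rests entirely on $\vect{F}_k\succ\vect{0}$ (a degenerate, noiseless effective channel would destroy the strict inequality). Second, and more delicate, is transferring the convergence guarantee from the classical fixed-target iteration to the \emph{normalized} iteration used for max-min balancing: one must show that the max-normalization preserves monotonicity and scalability and that its unique fixed point coincides with $\bm{\eta}^\star$, so that the implicit balancing level is driven to the optimal $t^\star$ along the trajectory rather than being searched externally. The remaining pieces, namely the entrywise positive-semidefinite monotonicity of $\vect{M}_k$ and the optimality characterization, are routine once the interference-function structure is in place.
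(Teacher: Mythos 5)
Your verification of the three interference-function axioms for the \emph{unnormalized} update is sound (positivity, monotonicity via the positive-semidefinite coefficients of $\vect{M}_k$, and scalability resting on $\vect{F}_k\succ\vect{0}$), and your optimality characterization (equal SINRs, $\max_k\eta_k^\star=\rho_u$) is fine. But the decisive final step contains a genuine gap: the max-normalized map $T(\bm{\eta})=\rho_u\,\vect{I}(\bm{\eta})/\max_j I_j(\bm{\eta})$ is \emph{not} monotone, contrary to your claim. A two-user toy example shows this: take $I_1(\bm{\eta})=\eta_2+1$, $I_2(\bm{\eta})=\eta_1+1$; then $\bm{\eta}=(1,0)^{\Ttran}\geq\bm{\eta}'=(0,0)^{\Ttran}$ componentwise, yet $T(\bm{\eta})=\rho_u(\tfrac{1}{2},1)^{\Ttran}$ has a strictly smaller first component than $T(\bm{\eta}')=\rho_u(1,1)^{\Ttran}$, because the normalizing denominator grows along with the numerator. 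Consequently the sub-/super-solution sandwich you invoke --- which is Yates' argument for the iteration with a \emph{fixed} SINR target $t$ --- does not transfer to the normalized max-min iteration, where the implicit balancing level changes every step; positivity, monotonicity and scalability of $\vect{I}$ alone are not sufficient to conclude convergence of this conditional-eigenvalue-type iteration.

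The missing ingredient, and the one the paper's proof actually rests on, is \emph{concavity} of the update map: the right-hand side of \eqref{eq:maxmin-step2} admits the MVDR-type representation $1\big/\big(\mathbb{E}\{\vect{g}_{kk}^{\Htran}\}\vect{C}_k^{-1}\mathbb{E}\{\vect{g}_{kk}\}\big)=\min_{\vect{a}:\,\vect{a}^{\Htran}\mathbb{E}\{\vect{g}_{kk}\}=1}\vect{a}^{\Htran}\vect{C}_k(\bm{\eta})\vect{a}$, and since $\vect{C}_k(\bm{\eta})$ is affine in $\bm{\eta}$ this is a pointwise minimum of affine functions, hence concave in $\bm{\eta}$ (this is \cite[Theorem~1]{Cai2011a}). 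For positive, monotone, \emph{concave} self-maps together with the monotonic box constraint $0\leq\eta_k\leq\rho_u$, the nonlinear Perron--Frobenius framework of \cite[Theorem~1]{Tan2014} guarantees that the normalized fixed-point iteration of Algorithm~1 has a unique fixed point coinciding with the optimum of \eqref{eq:constraint-maxmin} and converges to it geometrically from any positive initialization --- this is exactly the paper's (citation-based) proof. Note also that concavity plus positivity already implies your scalability axiom, so the interference-function viewpoint you adopt is strictly weaker than what the normalized iteration requires; to repair your argument you would need to either establish concavity as above or supply a contraction argument in Hilbert's projective metric, rather than the Yates sandwich.
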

	\begin{proof}
		The proof follows from \cite[Theorem~1]{Tan2014} by noting that the power constraints in \eqref{eq:constraint-maxmin} satisfy the monotonic constraints in \cite[Assumption~2]{Tan2014} and the concavity of the function on the right side of the equality in \eqref{eq:maxmin-step2} from \cite[Theorem 1]{Cai2011a}. Please see the discussion after Theorem 1 and the example in Equation (13) in \cite{Tan2014} for the exact details. 
	\end{proof}
\vspace{-0.3cm} 
We note that there are some seemingly similar fixed-point algorithms in the literature, used to jointly optimize beamforming vectors and power control coefficients  \cite{Rashid1998b,Dahrouj2010a,Cai2011a}. These are conceptually related, but the optimization variables are different. In this paper, we jointly optimize the long-term parameters that are kept fixed once optimized for a given setup. The novelty is to uncover that a low-complexity fixed-point algorithm converges to the joint global optimal solution of the max-min fair optimization problem in cell-free networks. Furthermore, we propose a sub-optimal fixed-point algorithm by replacing the matrix in Step 2 of Algorithm~1 by
\begin{equation} \vect{C}_k^{(j-1)}\!=\!\sum\limits_{i\in \mathcal{S}_k^R}\eta_{i}^{(j-1)}\mathbb{E}\!\left\{\vect{g}_{ki}\vect{g}_{ki}^{\Htran}\right\}\!-\!\eta_k^{(j-1)}\mathbb{E}\!\left\{\vect{g}_{kk}\right\}\!\mathbb{E}\!\left\{\vect{g}_{kk}^{\Htran}\right\}\!+\!\vect{F}_k \nonumber  
\end{equation}
for the power control with the nearly optimal LSFD. Following a similar approach as in Lemma~\ref{lemma:convergence} it can be shown that the modified fixed-point algorithm converges to the solution of the problem that is obtained by only keeping the UE indices in $\mathcal{S}_k^R$ in the objective of \eqref{eq:constraint-maxmin}:
\vspace{-0.2cm}
\begin{align}
& \underset{\left\{\eta_k\right\}}{\mathacr{maximize}} \ \ \underset{k}{\mathacr{min}} \  \eta_{k}\mathbb{E}\left\{\vect{g}_{kk}^{\Htran}\right\}\Bigg( \sum\limits_{i\in \mathcal{S}_k^R}\eta_{i}\mathbb{E}\left\{\vect{g}_{ki}\vect{g}_{ki}^{\Htran}\right\}-\nonumber\\
&\hspace{2.6cm}\eta_k\mathbb{E}\left\{\vect{g}_{kk}\right\}\mathbb{E}\left\{\vect{g}_{kk}^{\Htran}\right\}+\vect{F}_k\Bigg)^{-1}\mathbb{E}\left\{\vect{g}_{kk}\right\}  \nonumber \\
& {\mathacr{subject \ to}} \quad  0\leq \eta_k\leq \rho_u, \ \   k=1, \ldots, K. \label{eq:constraint-maxmin-2}
\end{align} 
 Although the objective function obtained in this way does not correspond to the SINR with the nearly optimal LSFD, in the special case when each UE is only affected by interference from precisely the UEs in $\mathcal{S}_k^R$, the algorithm converges to the global optimum. Simulations show that the fixed-point algorithm for both optimal and nearly optimal LSFD designs converges fast, i.e., in less than ten iterations.
\vspace{-2mm}
\begin{figure}[t!]
\begin{center}
	\linethickness{0.45mm}
	\line(1,0){250}
\end{center}
\vspace{-0.2cm}
{\bf Algorithm 1:} Fixed-Point Algorithm for Max-Min Fairness 
\vspace{-0.4cm}
\begin{center}
	\linethickness{0.15mm}
	\line(1,0){250}
\end{center}
\vspace{-0.2cm}
{\bf 1)} Initialize iteration counter: $j\leftarrow 0$. Initialize arbitrary positive power control coefficients: $\eta^{(j)}_k>0$, for $k=1,\ldots,K$.   \\
{\bf 2)} Set $j \leftarrow j+1$. Update the power control coefficients as
\begin{equation}
\eta_k^{(j)}=\frac{1}{\mathbb{E}\left\{\vect{g}_{kk}^{\Htran}\right\}\left(\vect{C}_k^{(j-1)}\right)^{-1}\mathbb{E}\left\{\vect{g}_{kk}\right\}} \label{eq:maxmin-step2}
\end{equation}
where
\begin{equation} \vect{C}_k^{(j-1)}\!=\!\sum\limits_{i=1}^K\eta_{i}^{(j-1)}\mathbb{E}\!\left\{\vect{g}_{ki}\vect{g}_{ki}^{\Htran}\right\}\!-\!\eta_k^{(j-1)}\mathbb{E}\!\left\{\vect{g}_{kk}\right\}\!\mathbb{E}\!\left\{\vect{g}_{kk}^{\Htran}\right\}\!+\!\vect{F}_k. \nonumber
\end{equation}
{\bf 3)} Scale the power control coefficients as 
\begin{equation}
\eta_k^{(j)}\leftarrow \frac{\rho_u}{\max\limits_{i} \ \eta_i^{(j)}}\eta_k^{(j)}.
\end{equation} 
{\bf 4)} Repeat Steps 2 and 3 until convergence. 
\vspace{-0.4cm}
\begin{center}
	\linethickness{0.15mm}
	\line(1,0){250}
\end{center}
\vspace{-0.8cm}
\end{figure}

\section{Numerical Results}\label{sec:numerical-results}
\vspace{-0.1cm}
In this section, we show that the user-centric Level 3 uplink operation results in a negligible performance loss compared to the original form of cell-free massive MIMO (where each UE is served by all the APs), while achieving a large reduction in complexity and fronthaul signaling. Moreover, we demonstrate the SE gain over the fully distributed Level 2 operation \cite{Bjornson2019c} and quantify the SE performance of the proposed optimal max-min fairness maximization algorithm. Note that we do not provide the performance of fully centralized (Level 4) scheme since our focus in this paper is on the improvements obtained by the proposed Level 3 scheme compared to their distributed counterparts. Level 4 certainly outperforms Level 2 and Level 3 operations by exploiting all the channel estimates at the CPU, but with a greater complexity in computing the receive combiners per coherence block \cite{Bjornson2020}.

We consider a simulation scenario with 400 random setups where $L=100$ APs with $N=4$ antennas and $K=40$ UEs are independently and uniformly dropped in a $1 \times 1$\,km square. We utilize the same propagation model (with wrap-around) and joint pilot assignment and cooperation cluster formation algorithm as in \cite{Bjornson2020}. The maximum uplink transmit power is $\rho_u=100$\,mW and we assume $\tau_c=200$ and $\tau_p=10$. For local decoding, the partial MMSE decoder in \cite{Bjornson2020} is used with either full power or max-min fair (MMF) power control. The power control coefficients in Algorithm~1 are all initialized as $\rho_u$ and the termination is done when $\frac{\left\vert\max_{k} \mathrm{SINR}_k-\min_{k} \mathrm{SINR}_k\right\vert}{\left\vert\max_{k} \mathrm{SINR}_k\right\vert}\le 0.001$. For the nearly-optimal LSFD case, $\mathrm{SINR}_k$ is replaced by the objective function of \eqref{eq:constraint-maxmin-2}.

\begin{figure}[t!]
	\vspace{0.1cm}
	\begin{center}
		\includegraphics[trim={2.2cm 0cm 3.2cm 0.6cm},clip,width=8.8cm]{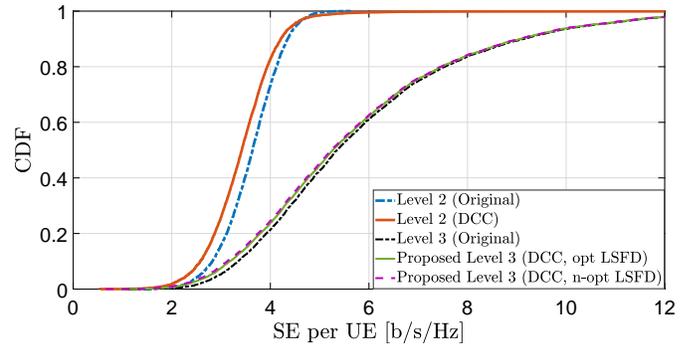}
		\vspace{-6mm}
		\caption{The CDF of the SE per user for full uplink power, i.e., $\eta_k=\rho_u$, $\forall k$.} \label{fig:sim1}
	\end{center}
\vspace{-6mm}
\end{figure}

Fig.~\ref{fig:sim1} shows the cumulative distribution function (CDF) of the SE per UE with Level 3 and Level 2 implementations. Full transmit power is used, i.e., $\eta_k=\rho_u, \forall k$. Note that very high SE can be obtained for some UEs but in practice the SE might be limited by the implementation constraints such as the largest modulation format and coding rate. We first notice that there is a very small gap between the original form of cell-free massive MIMO \cite{Bjornson2019c} and the proposed user-centric one with DCC. The advantage of proposed methods is quantified in Table~\ref{tab:signaling}, where the required number of complex multiplications for LSFD vector computation, fronthaul signaling in each coherence block and statistical parameters sent from all the APs to the CPU are compared. There are approximately 70-fold and five-fold reductions in the LSFD computation complexity and fronthaul signaling load, respectively, with DCC. Yet, the gap in SE is negligible, particularly at Level 3. 

Secondly, the proposed Level 3 operation with either optimal LSFD (denoted by ``opt'') or nearly optimal LSFD (denoted by ``n-opt'') provides significantly higher SE than Level 2 \cite{Bjornson2020} does. For n-opt LSFD, the parameter $R$ in constructing the sets in \eqref{eq:Sk} is set to 5. There is an approximately 50\% SE gain at the median point, where the CDF is 0.5. Interestingly, the proposed DCC with nearly optimal LSFD with $R=5$ performs almost the same SE as the optimal one with an additional two-fold reduction in statistical parameter sharing through the fronthaul as can be seen from Table~\ref{tab:signaling}. It is worth mentioning that the actual fronthaul load will also depend on the number of bits used in quantization. Nevertheless, the number of complex scalars in Table~\ref{tab:signaling} still represent significantly  reduced load thanks to the proposed method.

\begin{figure}[t!]
		\vspace{0.1cm}
	\begin{center}
		\includegraphics[trim={2.2cm 0cm 3.2cm 0.6cm},clip,width=8.8cm]{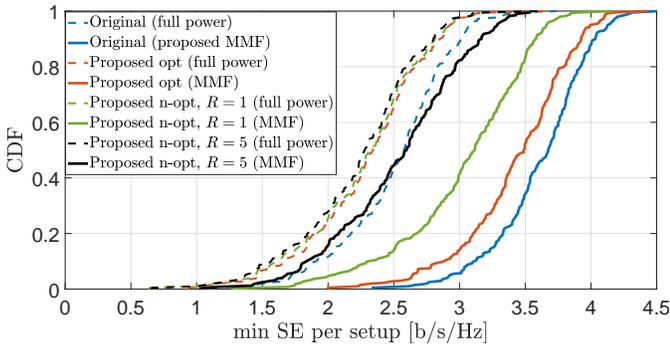}
		\vspace{-6mm}
		\caption{The CDF of the minimum SE per setup for full uplink power and MMF power control.} \label{fig:sim2}
	\end{center}
\vspace{-7mm}
\end{figure}
\begin{table}[t]
	\footnotesize
	\caption{For a generic UE, average number of complex multiplications to compute the LSFD vectors (Column 1), average number of complex scalars that are exchanged with the CPU for uplink signals in each coherence block (Column 2) and for the statistical parameters in each setup (Column 3).}  \label{tab:signaling}
	\vspace{-0.2cm}
	\centering
	\begin{tabular}{|c|c|c|c|} 
		\hline
		{} & {\shortstack{ LSFD  \\Computation}}&{\shortstack{ Uplink \\Signals}}&{Statistics}   \\      \hline\hline 
		
		Original, opt&   343300  & 19000  & 6050    \\ \hline    
		Proposed opt &   5025  & 4085 & 1301   \\   \hline   
	    Proposed n-opt ($R=1$)&   5025  & 4085 & 1144  \\   \hline   
	     Proposed n-opt ($R=5$)&   5025  & 4085 & 640  \\   \hline  
	\end{tabular}
\vspace{-4mm}
\end{table}

To see the fairness improvement by using the proposed MMF power control in Level 3 operation, in comparison to the full uplink power (i.e., $\eta_k=\rho_u$, $\forall k$), we plot the CDF of the minimum SE among all the UEs per setup in Fig.~\ref{fig:sim2}. For the original and the proposed DCC-based cell-free massive MIMO, we use Algorithm~1. However, in the DCC case with nearly optimal LSFD, the modified sub-optimal fixed-point algorithm is used. In both cases, there is a significant improvement in the worst SE in the network with the proposed MMF power control. We further note that for MMF, the reduction in the SE for n-opt LSFD with $R=5$ is higher than the case with $R=1$. However, from Table 1 we see that the exchange load in terms of statistical parameters is much less with $R=5$.

\vspace{-0.1cm}
\section{Conclusions}
\vspace{-0.2cm}
In this paper, we take a new look at cell-free massive MIMO with Level 3 uplink operation from a user-centric DCC perspective. We generalize the SE expressions and compute the fronthaul signaling load by incorporating DCC. We propose a new nearly optimal LSFD scheme with less fronthaul requirements and show that it performs almost the same as the optimal LSFD. We also take a look at the joint design of LSFD and MMF (max-min fair) power control for which no algorithm is provided in the literature. We propose the first fixed-point algorithm that obtains the global optimum with closed-form updates. We show numerically that the user-centric cell-free massive MIMO with DCC attains almost the same SE as the original form, but with much lower complexity and signaling requirements. Furthermore, the proposed optimal and sub-optimal MMF design improves the worst SE substantially in comparison to the full uplink power case for each UE.

\bibliographystyle{IEEEtran}

\bibliography{IEEEabrv,refs}

\begin{thebibliography}{10}
\providecommand{\url}[1]{#1}
\csname url@samestyle\endcsname
\providecommand{\newblock}{\relax}
\providecommand{\bibinfo}[2]{#2}
\providecommand{\BIBentrySTDinterwordspacing}{\spaceskip=0pt\relax}
\providecommand{\BIBentryALTinterwordstretchfactor}{4}
\providecommand{\BIBentryALTinterwordspacing}{\spaceskip=\fontdimen2\font plus
\BIBentryALTinterwordstretchfactor\fontdimen3\font minus
  \fontdimen4\font\relax}
\providecommand{\BIBforeignlanguage}[2]{{%
\expandafter\ifx\csname l@#1\endcsname\relax
\typeout{** WARNING: IEEEtran.bst: No hyphenation pattern has been}%
\typeout{** loaded for the language `#1'. Using the pattern for}%
\typeout{** the default language instead.}%
\else
\language=\csname l@#1\endcsname
\fi
#2}}
\providecommand{\BIBdecl}{\relax}
\BIBdecl

\bibitem{massivemimobook}
E.~Bj\"{o}rnson, J.~Hoydis, and L.~Sanguinetti, ``Massive {MIMO} networks:
  {Spectral}, energy, and hardware efficiency,'' \emph{Foundations and
  Trends{\textregistered} in Signal Processing}, vol.~11, no. 3-4, pp.
  154--655, 2017.

\bibitem{MassiveMIMO20}
L.~{Sanguinetti}, E.~{Bj\"{o}rnson}, and J.~{Hoydis}, ``Toward {Massive MIMO
  2.0}: Understanding spatial correlation, interference suppression, and pilot
  contamination,'' \emph{IEEE Trans. Commun.}, vol.~68, no.~1, 2020.

\bibitem{Ngo2017b}
H.~Q. Ngo, A.~Ashikhmin, H.~Yang, E.~G. Larsson, and T.~L. Marzetta,
  ``Cell-free {Massive} {MIMO} versus small cells,'' \emph{{IEEE} Trans.
  Wireless Commun.}, vol.~16, no.~3, pp. 1834--1850, 2017.

\bibitem{demir2021foundations}
{\"O}.~T. Demir, E.~Bj{\"o}rnson, and L.~Sanguinetti, ``Foundations of
  user-centric cell-free massive {MIMO},'' \emph{Foundations and
  Trends{\textregistered} in Signal Processing}, vol.~14, no. 3-4, pp.
  162--472, 2021.

\bibitem{Nayebi2016a}
E.~Nayebi, A.~Ashikhmin, T.~L. Marzetta, and B.~D. Rao, ``Performance of
  cell-free massive {MIMO} systems with {MMSE} and {LSFD} receivers,'' in
  \emph{Asilomar Conference on Signals, Systems and Computers}, 2016.

\bibitem{Bjornson2019c}
E.~Bj\"{o}rnson and L.~Sanguinetti, ``Making cell-free massive {MIMO}
  competitive with {MMSE} processing and centralized implementation,''
  \emph{{IEEE} Trans. Wireless Commun.}, vol.~19, no.~1, pp. 182--186, Jan.
  2020.

\bibitem{Interdonato2019a}
G.~Interdonato, P.~Frenger, and E.~G. Larsson, ``Scalability aspects of
  cell-free massive {MIMO},'' in \emph{IEEE International Conference on
  Communication (ICC)}, 2019.

\bibitem{Buzzi2017a}
S.~Buzzi and C.~D'Andrea, ``Cell-free massive {MIMO}: User-centric approach,''
  \emph{{IEEE} Commun. Lett.}, vol.~6, no.~6, pp. 706--709, 2017.

\bibitem{Bjornson2020}
E.~{Bj\"ornson} and L.~{Sanguinetti}, ``Scalable cell-free massive {MIMO}
  systems,'' \emph{IEEE Trans. Commun.}, vol.~68, no.~7, pp. 4247--4261, 2020.

\bibitem{Bjornson2011a}
E.~Bj{\"{o}}rnson, N.~Jald{\'e}n, M.~Bengtsson, and B.~Ottersten, ``Optimality
  properties, distributed strategies, and measurement-based evaluation of
  coordinated multicell {OFDMA} transmission,'' \emph{{IEEE} Trans. Signal
  Process.}, vol.~59, no.~12, pp. 6086--6101, 2011.

\bibitem{Bjornson2013d}
E.~Bj{\"{o}}rnson and E.~Jorswieck, ``Optimal resource allocation in
  coordinated multi-cell systems,'' \emph{Foundations and
  Trends{\textregistered} in Communications and Information Theory}, vol.~9,
  no. 2-3, pp. 113--381, 2013.

\bibitem{Ozdogan2019}
{\"O}.~{\"Ozdogan}, E.~{Bj\"ornson}, and J.~{Zhang}, ``Performance of cell-free
  massive {MIMO} with rician fading and phase shifts,'' \emph{IEEE Transactions
  on Wireless Communications}, vol.~18, no.~11, pp. 5299--5315, 2019.

\bibitem{Demir2020}
{\"O}.~T. {Demir} and E.~{Björnson}, ``Joint power control and {LSFD} for
  wireless-powered cell-free massive {MIMO},'' \emph{IEEE Transactions on
  Wireless Communications}, vol.~20, no.~3, pp. 1756--1769, 2021.

\bibitem{Ngo2018a}
H.~Q. {Ngo}, L.~{Tran}, T.~Q. {Duong}, M.~{Matthaiou}, and E.~G. {Larsson},
  ``On the total energy efficiency of cell-free massive {MIMO},'' \emph{IEEE
  Trans. Green Commun. Net.}, vol.~2, no.~1, pp. 25--39, 2018.

\bibitem{Zhang2018a}
J.~Zhang, Y.~Wei, E.~Bj{\"o}rnson, Y.~Han, and S.~Jin, ``Performance analysis
  and power control of cell-free massive {MIMO} systems with hardware
  impairments,'' \emph{IEEE Access}, vol.~6, pp. 55\,302--55\,314, 2018.

\bibitem{Tan2014}
C.~W. {Tan}, ``Wireless network optimization by {P}erron-{F}robenius theory,''
  in \emph{2014 48th Annual Conference on Information Sciences and Systems
  ({CISS})}, 2014, pp. 1--6.

\bibitem{Cai2011a}
D.~Cai, T.~Quek, and C.~Tan, ``A unified analysis of max-min weighted {SINR}
  for {MIMO} downlink system,'' \emph{{IEEE} Trans. Signal Process.}, vol.~59,
  no.~8, pp. 5384--5395, 2011.

\bibitem{Rashid1998b}
F.~Rashid-Farrokhi, L.~Tassiulas, and K.~J.~R. Liu, ``Joint optimal power
  control and beamforming in wireless networks using antenna arrays,''
  \emph{{IEEE} Trans. Commun.}, vol.~46, no.~10, pp. 1313--1324, 1998.

\bibitem{Dahrouj2010a}
H.~Dahrouj and W.~Yu, ``Coordinated beamforming for the multicell multi-antenna
  wireless system,'' \emph{{IEEE} Trans. Wireless Commun.}, vol.~9, no.~5, pp.
  1748--1759, 2010.

\end{thebibliography}

\end{document}